\documentclass[10pt, a4paper, twocolumn]{article}
\usepackage{xcolor}
\usepackage{graphicx} 
\usepackage{multirow}
\usepackage{amsmath} 
\usepackage{amssymb} 
\usepackage[figuresright]{rotating}
\usepackage{authblk}
\usepackage{booktabs}
\usepackage{array}
\usepackage{multicol}
\usepackage{todonotes}
\usepackage{amsthm}
\usepackage{textcomp}
\usepackage{subcaption,xcolor,lipsum}

\usepackage{algorithm}
\usepackage{algpseudocode}
\algdef{SE}[SUBALG]{Indent}{EndIndent}{}{\algorithmicend\ }%
\algtext*{Indent}
\algtext*{EndIndent}

\usepackage[symbol]{footmisc}

\newtheorem{defi}{Definition}
\newtheorem{theo}{Theorem}
\newtheorem{prop}{Proposition}
\newtheorem{cor}{Corollary}
\newtheorem{lem}{Lemma}
\newtheorem{example}{Example}

\setcounter{secnumdepth}{5}
\setcounter{footnote}{1}

\newcommand\Tstrut{\rule{0pt}{2.6ex}}

\DeclareMathOperator{\Haf}{Haf}
\DeclareMathOperator{\PMP}{PMP}
\DeclareMathOperator{\diag}{diag}
\DeclareMathOperator{\arctanh}{arctanh}
\DeclareMathOperator{\Perf}{Perf}

\title{A Quantum Photonic Approach to Graph Coloring}
\author[1,*]{Jesua Epequin}
\author[2]{Pascale Bendotti}
\author[2]{\footnote{On leave from EDF and now with QUBLY}Joseph Mikael}
\affil[1]{\textit{EDF (China) Holding Ltd., 12 Ding Jianguomenwai Ave., 100022 Beijing, China}}
\affil[2]{\textit{EDF Lab Paris-Saclay, 7 Bd. Gaspard Monge, 91120 Palaiseau, France}}
\affil[*]{Corresponding author. Email: jesua.epequin@edf.fr}
\date{\today}

\begin{document}
\maketitle

\begin{abstract}
Gaussian Boson Sampling (GBS) is a quantum computational model that leverages linear optics to solve sampling problems believed to be classically intractable. Recent experimental breakthroughs have demonstrated quantum advantage using GBS, motivating its application to real-world combinatorial optimization problems. 

In this work, we reformulate the graph coloring problem as an integer programming problem using the independent set formulation. This enables the use of GBS to identify cliques in the complement graph, which correspond to independent sets in the original graph. Our method is benchmarked against classical heuristics and exact algorithms on two sets of instances: Erdős-Rényi random graphs and graphs derived from a smart-charging use case. The results demonstrate that GBS can provide competitive solutions, highlighting its potential as a quantum-enhanced heuristic for graph-based optimization.
\end{abstract}

\section{Introduction}
The seminal work of \cite{aaronson2011computational} provided a theoretical foundation showing that a restricted model of quantum computation-based solely on linear optics-can solve certain sampling problems that are classically intractable under plausible complexity assumptions. 
Among these problems is Gaussian Boson Sampling (GBS), which lies within the $\#P$ complexity class and is believed to be infeasible for classical computers to simulate efficiently.

Experimental realizations of GBS have demonstrated quantum advantage over classical methods, first on static Gaussian boson distributions \cite{zhong2020quantum}, and more recently on parametrized distributions \cite{zhong2021phase, madsen2022quantum}. These milestones underscore the potential of GBS as a practical quantum computing paradigm and motivate its exploration in solving real-world industrial problems.

A key feature of GBS is its ability to encode the parameters of a Gaussian boson distribution into the adjacency matrix of a graph. Sampling from such distributions enables the identification of dense subgraphs with high probability \cite{PhysRevLett.121.030503}, leveraging the quantum device’s capacity to approximate matrix permanents—a task that is computationally intensive for classical machines.

In this paper, we investigate the application of GBS to the graph coloring problem, a fundamental NP-complete problem with diverse applications, including exam scheduling, chemical storage optimization, and matchmaking. Due to its computational hardness, heuristic methods that approximate optimal solutions are of significant interest.

Our main contribution is a novel approach that employs GBS to tackle an integer programming formulation of the graph coloring problem using an independent set formulation. By exploiting the equivalence between independent sets and cliques in the complement graph, we enable the use of GBS to identify promising colorings. We benchmark our method against existing heuristics and exact algorithms on two sets of instances: randomly generated Erdős-Rényi graphs and graphs derived from a smart-charging use case.

The remainder of this paper is organized as follows. Section \ref{sec:GBS} introduces Gaussian Boson Sampling and its relevance to graph-based problems. Section \ref{sec:graph_coloring} discusses the graph coloring problem, and reviews several solution strategies, including classical exact and heuristic methods, together with the novel hybrid classical-quantum approach integrating GBS. Section \ref{sec:results} presents numerical experiments that compare our method with established heuristics. Finally, Section \ref{sec:conclusion} summarizes our findings and outlines future research directions.
\section{Gaussian Boson Sampling}\label{sec:GBS}
Quantum algorithms are often studied in contexts where their intrinsic randomness is seen as a challenge rather than an advantage, as demonstrated in Shor's algorithm. However, when applied to the appropriate problems, this randomness can be leveraged as a powerful tool. One such example is Gaussian Boson Sampling (GBS), a specialized photonic protocol that takes advantage of quantum randomness for specific tasks. In this section, we introduce GBS and highlight a potential application.

\subsection{Boson Sampling}
Photonic Boson Sampling \cite{Gard} involves sending a single photon Fock state to a linear interferometer (a mixture of rotation and beamsplitter gates), described by a matrix $A$, which transforms $m$
input qumodes into $m$ output qumodes. A general \emph{qumode} can be expressed as a linear combination $\lvert \psi\rangle = \sum_{n=0}^\infty c_n \lvert n\rangle$, of unitary norm. A basis state $\lvert n\rangle$ has the physical interpretation of a qumode with $n$ photons. We denote a state of $n$ single-photons arranged in $m$ qumodes as $\lvert \mathbf{s}\rangle = \lvert s_1,\ldots,s_m\rangle$, where $s_k$ is the number of photons in the $k$-th qumode, and $\sum_k n_k = n$.

\subsection{Gaussian States}
A Gaussian state is characterized only by a complex $2m\times 2m$ covariance matrix $\sigma$, and a displacement vector $d\in \mathbb{C}^m$ \cite{Ferraro}. In case of null displacement ($d=0$), the Gaussian state can be prepared from the vacuum using only single-mode squeezing followed by linear-optical interferometry. \cite{Hamilton} shows that when the modes of a such a Gaussian state are measured, the probability $P(\mathbf{s})$ of observing a pattern of photons $\lvert\mathbf{s}\rangle$, is given by 
\begin{align}\label{GbsProb}
P(\mathbf{s})= \frac{1}{\det(\sigma_Q)^{1/2}}\frac{\Haf(\mathfrak{A}_\mathbf{s})}{s_1!\ldots s_m!}
\end{align}
where $\sigma_Q = \sigma + \mathbf{I}_{2m}/2$, and
$$
\mathfrak{A} = \bigl[ 
\begin{smallmatrix}
0 & \mathbf{I}_m\\ \mathbf{I}_m & 0
\end{smallmatrix} 
\bigr]
(\mathbf{I}_{2m} - \sigma_Q^{-1}).
$$
The submatrix $\mathfrak{A}_\mathbf{s}$ is constructed by deleting columns and rows $(i,i+m)$ of $\mathfrak{A}$ if $n_i=0$, and by repeating them $n_i$ times in case $n_i>0$; the \emph{hafnian} is defined as:
$$
\Haf(\mathfrak{A})=\sum_{\pi \in \PMP} \prod_{(i,j)\in \pi} \mathfrak{A}_{ij},
$$
where $\PMP$ is the set of \emph{perfect matching permutations}. A perfect matching in a graph is a matching that covers every vertex of the graph, an edge subset such that each vertex in the graph is connected by an edge with another unique vertex.

The matrix $\mathfrak{A}$ can be written as $\mathfrak{A} = \bigl[\begin{smallmatrix} A & C\\ C^T & A^* \end{smallmatrix}\bigr]$. If $C=0$, the state is called \emph{pure} and Equality (\ref{GbsProb}) becomes:
\begin{align}\label{GbsPureProb}
P(\mathbf{s})= \frac{1}{\det(\sigma_Q)^{1/2}}\frac{\lvert\Haf(A_\mathbf{s})\rvert^2}{s_1!\ldots s_m!},
\end{align}
where $A_\mathbf{s}$ is the submatrix that comprises only the rows and columns (with repetitions) where photons were detected.

Several photons can be detected in a given mode ($s_k>1$). If only the location of the observed photons is relevant (not their quantity at a given vertex), we can set $s_k=1$ whenever $s_k>1$. Physically, this corresponds to using (non-number resolution) \emph{threshold detectors}, which destructively determines whether a mode contains photons or not. In particular, they \emph{click} whenever one or more photons are observed. If necessary, the number of photons in a given mode can be obtained thanks to the (number-resolving) \emph{single photon detectors} \cite{AA}. 

\subsection{GBS with Squeezed States}
For pure Gaussian states with zero displacement, we can use the equality defined in (\ref{GbsPureProb}) to devise a quantum sampling problem. First, using the Takagi-Autonne decomposition \cite{Horn_Johnson} we can write
\begin{align}\label{TakaguAutonne}
A = U\diag(\lambda_ 1,\ldots,\lambda_m)U^T,
\end{align}
where $U$ is a unitary matrix. We can then use the values $\lambda_k\in [0,1[$ to determine the single-mode squeezing parameter $r_k = \arctanh(\lambda_k)$, and use the unitary $U$ for the linear interferometer. The \emph{mean photon number} of the distribution is given by $\hat{n}=\sum \lambda_k^2/(1-\lambda_k^2)$.

To encode an arbitrary symmetric matrix $A$ on a GBS device, we rescale it by a parameter $c > 0$ to obtain a matrix $cA$  that satisfies $c\lambda_k \in [0,1[$. This parameter controls the squeezing parameters $r_k$ and the mean photon number. 

A GBS device can be programmed by the following procedure:
\begin{itemize}
    \item[1.] Compute decomposition \eqref{TakaguAutonne} of $A$ to determine $U$ and $\lambda_1,\ldots,\lambda_m$.
    \item[2.] Program linear-interferometer according to $U$.
    \item[3.] Find $c>0$ such that $\hat{n}=\sum (c\lambda_k)^2/(1-(c\lambda_k^2))$.
    \item[4.] Program squeezing gates according to $r_k=\arctanh(c\lambda_k)$.
\end{itemize}
The GBS device then samples according to 
\begin{align}\label{GBSSymMat}
P(\mathbf{s}) \propto c^n \frac{\lvert\Haf(A_\mathbf{s})\rvert^2}{s_1!\ldots s_m!},
\end{align}

\subsection{Complexity}
In computational complexity, an algorithm is regarded as efficient if its runtime and resource requirements scale at most polynomially with the input size. Using only linear optical components—such as beamsplitters, phase shifters, photodetectors, and classical feedback from detection outcomes—it can be shown that such efficiency is achievable. Indeed, the unitary operator governing the evolution of the input state (Equality \ref{TakaguAutonne}) can be efficiently decomposed into $\mathcal{O}(m^2)$ optical elements \cite{Reck}. By contrast, the calculation of matrix permanents is a $\#\text{P}$-complete problem, which is generally considered even harder than NP-complete problems. The fastest known classical algorithm, due to Ryser~\cite{Ryser}, requires $\mathcal{O}(2^n n^2)$ time. Consequently, any attempt to classically simulate boson sampling by directly computing matrix permanents would demand exponential computational resources.

\subsection{Encoding Graphs}\label{GBSGraph}
As symmetric matrices and undirected graphs, this structural correspondence enables the embedding of graphs into GBS devices.
As explained above, symmetric matrices can be encoded in GBS devices. Since undirected graphs are inherently associated with such matrices, they can be mapped directly to GBS devices. More precisely, let $G=(V,E)$ be an undirected graph with vertex set $V$ and edge set $E$. We can define its \emph{adjacency matrix} $A$ by $A_{ij} = 1$ if $(v_i,v_j)\in E$; and $A_{ij} = 0$ otherwise. Such matrices completely encode the graph $G$ and are by definition symmetric.  

From equality (\ref{GBSSymMat}), the probability of sampling $\mathbf{s}=\lvert s_1,\ldots,s_m\rangle$ is proportional to the square of the hafnian of the matrix $A_\mathbf{s}$. If $s_k\in\{0,1\}$ for all $k$, the matrix $A_\mathbf{s}$ can be obtained from $A$ by deleting row and column $k$ if $s_k=0$. Therefore the qumode $\mathbf{s}$ can be identified with a subgraph $G[\mathbf{s}]$ of $G$, and $A_\mathbf{s}$ with its adjacency matrix. 

By definition, the hafnian of an adjacency matrix, is the number of \emph{perfect matchings} $\Perf(G)$ of $G$. Therefore, we can rewrite equality (\ref{GBSSymMat}) as the probability of sampling subgraphs $G[\mathbf{s}]$: 
\begin{align}\label{GBSSubGraphProb}
P(G[\mathbf{s}]) \propto \frac{\Perf(G[\mathbf{s}])^2}{s_1!\ldots s_m!},
\end{align}
The correspondence between the number of perfect matchings in a graph and its \emph{density}\footnote{Defined as $d(G)=\frac{2\lvert E\rvert}{\lvert V\rvert(\lvert V\rvert-1)}$} is highlighted in \cite{Aaghabali}. Roughly speaking, a graph with many perfect matchings is expected to contain many edges. More precisely, it was established that the number of perfect matchings in a graph $G$ with an even number $2m$ of vertices is upper-bounded by an increasing function of the number of edges. Therefore, the number of perfect matchings is positively correlated with the density of a graph. This implies that GBS devices sample high density subgraphs with high probability.

\subsection{Application to Maximum Clique}\label{GBS_clique}
A clique in a graph is a subgraph in which every pair of distinct vertices is adjacent. It can be described as a complete graph with density equal to one. 
The maximum clique problem involves finding the largest subgraph, $\mathcal{S}$ that satisfies the following condition of unit density:
$$
argmax_\mathcal{S}\{\lvert S\rvert:d(\mathcal{S})=1\}
$$
The maximum clique problem is known to be NP-hard, meaning that there is no known polynomial-time algorithm to solve it in the general case. Its applications cover a large range of fields, including computational biology \cite{Malod-Dognin}, social network analysis \cite{Balasundaram}, and scheduling problems (as presented below).
 
In classical computing, a common heuristic approach to finding maximum cliques in large graphs involves local search techniques \cite{Pullman}. These methods are favored for their ability to efficiently explore the solution space.  While they do not guarantee identification of the global maximum clique, they often succeed in finding large cliques in polynomial time. This makes them practical for use in real-world applications.

The corresponding algorithms start with a clique $\mathcal{C}$ and rely on two key operations: growing and swapping.
The growing operation is to expand clique $\mathcal{C}$ by adding one vertex at a time. The algorithm selects in the remainder of the graph a vertex that is connected to every vertex in $\mathcal{C}$, and appends it, thus resulting in a larger clique. The process continues until no additional vertices can be added without violating the clique condition. In the swapping operation, the algorithm finds the set of vertices in the remainder of the graph that are connected to all but one vertex from $\mathcal{C}$, and swaps one of these vertices with the corresponding vertex in the clique. Both operations are repeated until no further growth can occur.

As mentioned above, GBS devices can be programmed to sample dense subgraphs with high probability. Since cliques are graphs of unit density, they are the most likely subgraphs to be sampled from GBS. A way to improve local search algorithms is to sample dense subgraphs from GBS and use them as starting point for grow and swap algorithms. Not all sampled subgraphs are guaranteed to be cliques, but they can be obtained from dense subgraphs by removing vertices with low degree until a clique is found.  

\section{Graph coloring}\label{sec:graph_coloring}
Given a graph $G=(V,E)$ with $n$ vertices and $m$ edges, a \emph{graph coloring} is an assignment of labels called colors to each vertex of the graph so that no two adjacent vertices are of the same color. A coloring using at most $k$ colors is called a $k$-coloring, and can be defined as a function $c:V\rightarrow\{1,2,\ldots,k\}$ such that $c(u)\neq c(v)$, whenever $(u,v)\in E$. The smallest number of colors needed to color a graph $G$ is called its \emph{chromatic number}, and is often denoted by $\chi(G)$. A minimum coloring is a coloring with the fewest different colors. The graph coloring problem is to find a minimum coloring. For $k\ge 3$ the graph coloring problem is NP-hard in the general case.

An \emph{independent set} of $G$ is a subset of $V$ such that no two vertices are adjacent. Note that in any coloring of $G$ all vertices with the same color constitute an independent set, then a $k$-coloring is a partition of $V$ into $k$ independent sets, namely $\{V_1, \ldots, V_k\}$. A \emph{maximal independent set (MIS)} is an independent set not strictly included in any other. 

Cliques and independent sets are complementary concepts. This relationship allows for the translation of problems and solutions between the two concepts. The translation involves the complement of a graph $G$.
\begin{defi}
The complement of $G$ is the graph $\bar{G}=(\bar{V},\bar{E})$, with same vertex set as $G$, namely $\bar{V}=V$, and whose edge set is defined as
$$
\bar{E}=\{(u,v)\in V\times V\hspace{1pt}\lvert\hspace{1pt}(u,v)\notin E\}.
$$
\end{defi}
From this definition it follows that the MIS problem in $G$ is equivalent to the maximum clique problem in $\bar{G}$. 
\begin{prop}\label{prop:mis_clique}
The set of vertices that form a maximum clique in $\bar{G}$ corresponds to vertices in $G$ that constitute a maximum independent set.
\end{prop}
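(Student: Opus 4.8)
The plan is to establish a size-preserving bijection between independent sets of $G$ and cliques of $\bar{G}$, and then argue that maximizers on one side correspond to maximizers on the other. Since $G$ and $\bar{G}$ share the same vertex set $V$ by definition of the complement, the natural candidate for this correspondence is simply the identity map on subsets of $V$. The whole argument then reduces to verifying that this map sends independent sets to cliques and back, and that it preserves cardinality.

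First I would prove the core equivalence: for any subset $S\subseteq V$, $S$ is an independent set of $G$ if and only if $S$ is a clique of $\bar{G}$. The forward direction follows directly from the definitions. If $S$ is independent in $G$, then for every pair of distinct $u,v\in S$ we have $(u,v)\notin E$, which by the definition of the complement means $(u,v)\in\bar{E}$; hence every pair in $S$ is adjacent in $\bar{G}$, so $S$ is a clique of $\bar{G}$. The converse runs the same chain of equivalences backward, using the biconditional $(u,v)\in\bar{E}\Leftrightarrow(u,v)\notin E$ supplied by the definition of $\bar{G}$.

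Next I would observe that this correspondence, being the identity on vertex subsets, is cardinality-preserving: the set $S$ and its image under the correspondence are literally the same set. Consequently the collection of achievable independent-set sizes in $G$ coincides exactly with the collection of achievable clique sizes in $\bar{G}$, and in particular their maxima agree. A set $S$ attaining the maximum clique size in $\bar{G}$ is therefore an independent set of $G$ of that same size, and no strictly larger independent set can exist in $G$, for otherwise it would yield a strictly larger clique in $\bar{G}$, contradicting the maximality of $S$. This establishes the claimed correspondence.

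I do not anticipate a genuine technical obstacle, since the statement is essentially definitional once the complement graph is available. The only point requiring care is the logical bookkeeping: one must handle the biconditional between edge membership in $E$ and non-membership in $\bar{E}$ cleanly, and make explicit that the word \emph{maximum} transfers precisely because the bijection preserves cardinality, not merely the property of being an independent set or a clique.
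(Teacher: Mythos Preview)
Your argument is correct and is exactly the definitional unpacking one would expect. The paper does not actually supply a proof of this proposition: it simply states, right after defining the complement graph, that ``from this definition it follows that the MIS problem in $G$ is equivalent to the maximum clique problem in $\bar{G}$,'' and then records the proposition without a proof environment. In other words, the paper treats the claim as an immediate consequence of the definition, and your proposal is a faithful expansion of that one-line justification.
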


\subsection{Reduction to maximum clique}\label{sec:reduction_maxclique}
The $k$-coloring problem involves assigning one of $k$ possible colors to each vertex of a graph so that no two adjacent vertices share the same color. This classical problem can be related to  finding a maximum clique in a suitably constructed graph. 
\begin{defi}
The augmented $k$-graph $G_k=(V_k,E_k)$, has a vertex set defined as
$$
V_k = \{v^i\hspace{1pt}\lvert\hspace{1pt}v\in V,\hspace{1pt} i\in\{1,\ldots,k\}\}.
$$
The edge set is the union $E_k = E^{\text{\textdblhyphen}} \cup E_{\text{\textdblhyphen}}$ where
\begin{align*}
E^{\text{\textdblhyphen}} & = \{(u^i,v^i)\hspace{1pt}\lvert\hspace{1pt}(u,v)\in E,\hspace{1pt} i\in\{1,\ldots,k\}\} \hspace{1pt}  \\
E_{\text{\textdblhyphen}} & =\{(v^i,v^j)\hspace{1pt}\lvert\hspace{1pt}v\in V,\hspace{1pt}i \neq j\in\{1,\ldots,k\}\}
\end{align*}
\end{defi}

The augmented $k$-graph $G_k$ is built by creating $k$ copies of each vertex in the original graph $G$, one for each possible color. Edge set $E_k$ is constructed to reflect two kinds of constraints. Set $E^{\text{\textdblhyphen}}$ includes conflict edges: they connect vertices $u^i$ and $v^i$ if $u$ and $v$ are adjacent in the original graph and are both assigned the same color $i$, a direct violation of the coloring constraint. Meanwhile, set $E_{\text{\textdblhyphen}}$ includes clonal edges, which connect all copies $v^i$ and $v^j$ of the same vertex $v$ across different colors. This ensures that only one copy of each vertex can appear in any independent set. It is worth noting that this transformation is polynomial. 

\begin{prop}
The transformation from $G$ to $\bar{G}_k$ is polynomial in the size of the instance $G$ and the number of colors $k$. 
\end{prop}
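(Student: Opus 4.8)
The plan is to show that both the \emph{size} of the output graph $\bar{G}_k$ and the \emph{time} needed to construct it are bounded by polynomials in $n=|V|$, $m=|E|$, and $k$. Since complementation preserves the vertex set and merely toggles adjacencies, it suffices to first bound the size of $G_k$, and then argue that forming its complement adds only polynomial overhead.

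First I would count the vertices and edges of $G_k$ directly from the definitions. Each of the $n$ vertices of $G$ is duplicated into $k$ colored copies, so $|V_k| = nk$. The conflict edges $E^{\text{\textdblhyphen}}$ consist of one copy of each edge of $G$ in every color class, giving $|E^{\text{\textdblhyphen}}| = km$; the clonal edges $E_{\text{\textdblhyphen}}$ form a clique on the $k$ copies of each vertex, giving $|E_{\text{\textdblhyphen}}| = n\binom{k}{2}$. These two sets are disjoint — a conflict edge joins distinct base vertices sharing a common superscript, whereas a clonal edge joins a common base vertex across distinct superscripts — so
\[
|E_k| = km + n\binom{k}{2},
\]
which is manifestly polynomial in $n$, $m$, and $k$.

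Next I would bound the complement. The graph $\bar{G}_k$ has the same $nk$ vertices, and its edge count is $|\bar{E}_k| = \binom{nk}{2} - |E_k| = O(n^2 k^2)$, so even writing the output down costs only polynomial space. For the running time, building $G_k$ requires iterating over the $k$ color classes and the $m$ edges (for $E^{\text{\textdblhyphen}}$), and over the $n$ vertices and $\binom{k}{2}$ superscript pairs (for $E_{\text{\textdblhyphen}}$), i.e.\ $O(km + nk^2)$ steps. Complementation then scans all $\binom{nk}{2}$ vertex pairs, testing membership in $E_k$ in constant time per pair using, say, a hash set or an adjacency-matrix representation, for a total of $O(n^2 k^2)$. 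Summing these establishes a polynomial bound on the whole transformation.

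I do not expect a genuine obstacle here: the statement is a complexity-bookkeeping argument rather than a structural result. The only points requiring care are to bound the size of the \emph{output}, not merely the number of construction steps — since a superpolynomial output would preclude a polynomial construction regardless of the per-element cost — and to confirm the disjointness of $E^{\text{\textdblhyphen}}$ and $E_{\text{\textdblhyphen}}$ so that the edge count above is exact rather than an overestimate. Both are immediate from the definitions.
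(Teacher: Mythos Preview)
Your proposal is correct and follows essentially the same approach as the paper: the paper's proof is a one-line size count, observing that $G_k$ (resp.\ $\bar{G}_k$) has $kn$ vertices and a number of edges polynomial in $n$, $m$, and $k$. Your version is more thorough---you also bound the construction time and verify disjointness of $E^{\text{\textdblhyphen}}$ and $E_{\text{\textdblhyphen}}$---but the core argument is identical.
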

\begin{proof}
The statement is derived from the fact that $G_k$ (resp. $\bar{G_k}$) contains $kn$ vertices and $km+nk(nk-1)/2$ edges (resp. $nk^2(n-1)/2-km$ edges).
\end{proof}

\begin{defi}\label{def:projection}
Let $H_k$ be a subgraph of $G_k$, the \emph{projection} $p_G:V(H_k)\rightarrow V$ of $H_k$ on $G$ is the function defined by $p_G(v^i)=v$ for $v^i\in H_k$.
\end{defi}

When there is no risk of confusion, we will denote the image of the projection $p_G:V(H_k)\rightarrow V$ by $p_G(H_k)$. 

\begin{lem}\label{lem:injective_projection}
If $H_k$ is an independent set of $G_k$, then $p_G:V(H_k)\rightarrow V$ becomes an injective function.
\end{lem}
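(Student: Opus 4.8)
The plan is to prove injectivity directly from the definition of an injective map: I would take two vertices $a, b \in V(H_k)$ with $p_G(a) = p_G(b)$ and show they must coincide. By Definition \ref{def:projection}, the projection collapses all copies $v^1, \ldots, v^k$ of a vertex $v$ onto $v$ itself, so the equality $p_G(a) = p_G(b)$ forces $a$ and $b$ to be copies of one common vertex, say $a = v^i$ and $b = v^j$ for some $i, j \in \{1,\ldots,k\}$.

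The crux is then to rule out $i \neq j$. First I would observe that if $i \neq j$, the clonal edge set $E_{\text{\textdblhyphen}}$ contains precisely the edge $(v^i, v^j)$ by its very definition, and since $E_{\text{\textdblhyphen}} \subseteq E_k$ this edge is present in $G_k$. Because both $v^i$ and $v^j$ lie in $V(H_k)$ and $H_k$ is by hypothesis an independent set of $G_k$, no edge of $G_k$ may join two of its vertices---contradicting the presence of $(v^i, v^j)$. Hence $i = j$, so $a = b$, which establishes injectivity.

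The argument hinges entirely on the clonal edges, which were introduced for exactly this purpose: to guarantee that at most one copy of each original vertex survives in any independent set. I do not anticipate a genuine obstacle here, since the conclusion follows immediately once the two distinct copies are connected by a clonal edge; the only care needed is to correctly unpack Definition \ref{def:projection} and the inclusion $E_{\text{\textdblhyphen}} \subseteq E_k$, rather than to appeal to the conflict edges $E^{\text{\textdblhyphen}}$, which play no role in this particular claim.
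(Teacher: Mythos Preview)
Your proposal is correct and follows essentially the same argument as the paper's own proof: both take two vertices in $H_k$ with equal projection, deduce they are copies $v^i,v^j$ of a common vertex, and invoke the clonal edge $(v^i,v^j)\in E_{\text{\textdblhyphen}}$ to contradict independence unless $i=j$. Your write-up is slightly more explicit about the inclusion $E_{\text{\textdblhyphen}}\subseteq E_k$ and the irrelevance of $E^{\text{\textdblhyphen}}$, but the underlying route is identical.
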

\begin{proof}
Let $u^i$ and $v^j$ be two vertices in $H_k$ with the same projection to $V$. By Definition \ref{def:projection} this means that $u=v$. If $i\neq j$ then $(v^i,v^j) \in E_{\text{\textdblhyphen}}$, contradicting the independence of $H_k$. The lemma follows.   
\end{proof}

\begin{theo}\label{theo:kcoloring_kclique}
The graph $G$ admits a $k$-coloring if and only if the complement graph $\bar{G}_k$ contains a clique of size $\lvert V\rvert$.
\end{theo}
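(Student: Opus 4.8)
The plan is to translate the clique condition in $\bar{G}_k$ into an independent-set condition in $G_k$, and then to exhibit an explicit correspondence between $k$-colorings of $G$ and independent sets of $G_k$ of size $|V|$. First I would invoke the complement relationship underlying Proposition~\ref{prop:mis_clique}: a set of vertices forms a clique in $\bar{G}_k$ precisely when it forms an independent set in $G_k$, so a clique of size $|V|$ in $\bar{G}_k$ is exactly an independent set of size $|V|$ in $G_k$. This reduces the theorem to showing that $G$ is $k$-colorable if and only if $G_k$ contains an independent set with $|V|$ vertices.

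For the forward direction, I would start from a $k$-coloring $c$ and form the set $S = \{v^{c(v)} : v \in V\}$, taking one copy of each vertex at its assigned color. I would verify that $S$ is independent by ruling out both edge types: clonal edges $E_{\text{\textdblhyphen}}$ cannot occur because $S$ contains exactly one copy of each vertex, and conflict edges $E^{\text{\textdblhyphen}}$ cannot occur because such an edge $(u^i,v^i)$ would force $c(u)=c(v)$ with $(u,v)\in E$, contradicting the coloring constraint. Since $S$ has exactly $|V|$ vertices, this yields the required independent set, hence the clique in $\bar{G}_k$.

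For the reverse direction, I would take an independent set $H_k$ of $G_k$ with $|V(H_k)|=|V|$ and use Lemma~\ref{lem:injective_projection} to conclude that the projection $p_G$ is injective. Because an injection from a set of cardinality $|V|$ into $V$ is necessarily a bijection, each vertex $v\in V$ has exactly one copy $v^{i_v}\in H_k$, and I would define $c(v)=i_v$. The validity of $c$ then follows by contradiction: if some edge $(u,v)\in E$ had $c(u)=c(v)=i$, then $u^i$ and $v^i$ would both lie in $H_k$ and be joined by a conflict edge in $E^{\text{\textdblhyphen}}$, violating independence.

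The main obstacle is the cardinality bookkeeping in the reverse direction: injectivity of the projection alone only guarantees that $H_k$ selects \emph{at most} one copy per vertex, so the hypothesis $|V(H_k)|=|V|$ is essential to promote this to a full coloring that assigns a color to \emph{every} vertex. Getting this pigeonhole step right—and confirming that the resulting $c$ is well defined on all of $V$—is where the argument must be precise; the edge-checking in both directions is routine once the correspondence is set up.
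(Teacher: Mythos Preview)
Your proposal is correct and follows essentially the same approach as the paper's proof: both directions construct the same explicit correspondence $S=\{v^{c(v)}:v\in V\}$ (forward) and $c(v)=i$ when $v^i$ lies in the independent set (reverse), verifying independence/validity against the two edge types and invoking Lemma~\ref{lem:injective_projection} together with the cardinality hypothesis for surjectivity. The only cosmetic difference is ordering in the reverse direction---the paper first shows the relation $c$ is a well-defined partial coloring and then uses injectivity plus $|I_k|=|V|$ to extend it to all of $V$, whereas you establish the bijection first and then read off $c$---but the content is identical.
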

\begin{proof}
$(\Rightarrow)$ Let $c:V\rightarrow\{1,2,\ldots,k\}$ be a $k$-coloring of $G$. From Proposition \ref{prop:mis_clique}, a clique in the complement graph $\bar{G}_k$ corresponds to an independent set in $G_k$. Consider the subgraph $I_k$ of $G_k$ induced by the vertex set $\{v^{c(v)}\hspace{1pt}\lvert\hspace{1pt} v \in V\}$, and let $u^{c(u)}$ and $v^{c(v)}$ be two different vertices in $I_k$. By construction $u\neq v$, which implies that $(u^{c(u)},v^{c(v)})\notin E_{\text{\textdblhyphen}}$. Furthermore, if $(u,v)\in E$, then $c(u)\neq c(v)$, because $c$ is a coloring, whence $(u^{c(u)},v^{c(v)})\notin E^{\text{\textdblhyphen}}$. Therefore, there is no edge between any pair of vertices in $I_k$, implying that it is an independent set in $G_k$. Since, $I_k$ contains exactly one vertex for each $v\in V$, it has size $\lvert V\rvert$, completing the proof.

$(\Leftarrow)$ Let $\mathcal{C}_k$ be a clique in $\bar{G}_k$, and $I_k$ the corresponding independent set in $G_k$. Let $c:p_G(I_k)\rightarrow\{1,2,\ldots,k\}$ be a relation taking values on the projection $p_G(I_k)\subset V$ of $I_k$ in $G$, and defined by $c(v)=i$, when $v^i\in I_k$. If $c$ assigns two different values $i\neq j$ to a given $v\in p_G(I_k)$, by definition $v^i$ and $v^j$ belong to $I_k$, which contradicts the independence of $I_k$, since in this case $(v^i,v^j)\in E_{\text{\textdblhyphen}}$. This implies that $c$ is indeed a function. If $u$ and $v$ are adjacent in $G$, since $(u^{c(u)},v^{c(v)})\notin E^{\text{\textdblhyphen}}$, then $c(u)\neq c(v)$, which ensures that $c$ defines a coloring of $p_G(I_k)$. Finally, from Lemma \ref{lem:injective_projection}, the function $p_G:V(I_k)\rightarrow V$ is injective. If $\mathcal{C}_k$ has size $|V|$, then so does $I_k$ and this function becomes bijective, which implies $p_G(I_k)=V$. The result follows. 
\end{proof}

\begin{figure}
    \centering
    \includegraphics[width=\linewidth]{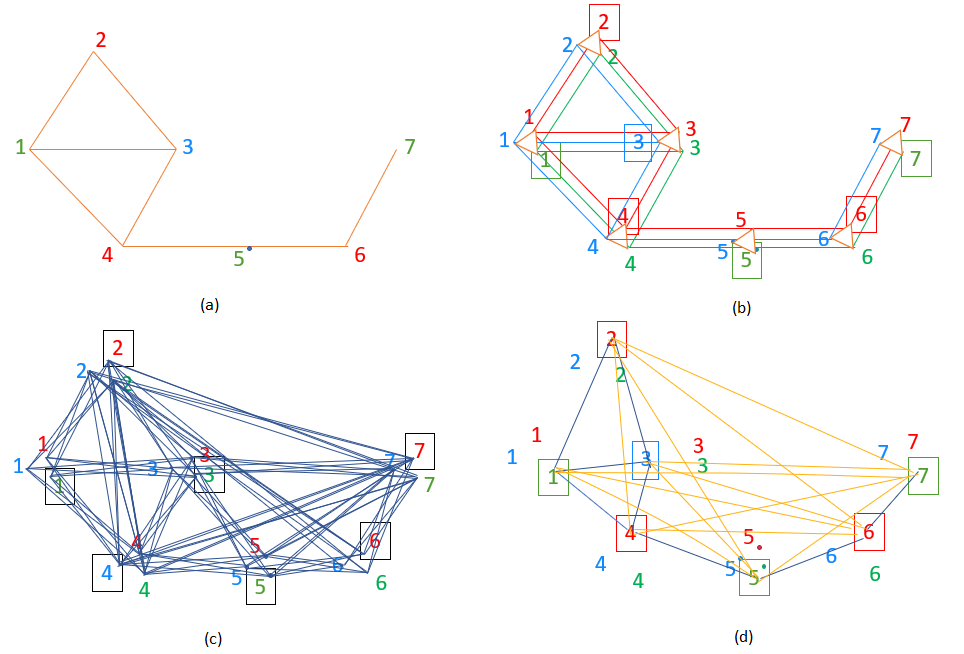}
    \caption{
    An illustration of Theorem \ref{theo:kcoloring_kclique} with a 7-vertex graph example}
    \label{fig:MIS}
\end{figure}

\begin{example}
Figure \ref{fig:MIS} illustrates Theorem \ref{theo:kcoloring_kclique} with a simple example. The original graph $G$ is presented in (a) together with a valid 3-coloring. The augmented graph $G_3$ is shown in (b), and its complement $\bar{G}_3$ in (c). Finally, in (d) a clique of size $|V|$ in $\bar{G}_3$ is displayed, which directly corresponds to the given $3$-coloring of $G$. The blue edges emphasize how the structure of $G$ is preserved in the clique representation.  
\end{example}

The proof of Theorem \ref{theo:kcoloring_kclique} has the following consequence.

\begin{cor}
Let $\mathcal{C}_k$ be a maximum clique in $\bar{G}_k$, and let $H$ be the subgraph of $G$ induced by the vertex set $p_G(\mathcal{C}_k)$ (Definition \ref{def:projection}). Then $\mathcal{C}_k$ defines a coloring  $c:H\to\{1,\ldots,k\}$  given by $c(v)=i$ whenever $v^i\in\mathcal{C}_k$
\end{cor}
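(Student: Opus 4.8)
The plan is to reuse, almost verbatim, the $(\Leftarrow)$ direction of Theorem \ref{theo:kcoloring_kclique}, simply dropping the hypothesis that the clique has size $\lvert V\rvert$. First I would invoke Proposition \ref{prop:mis_clique} to pass from the maximum clique $\mathcal{C}_k$ in $\bar{G}_k$ to the corresponding independent set $I_k$ in $G_k$, noting that $p_G(\mathcal{C}_k)=p_G(I_k)$ is precisely the vertex set of $H$. The relation $c$ is then defined on this set by $c(v)=i$ whenever $v^i\in I_k$.

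The next step is to verify that $c$ is a genuine \emph{function}. The only way this could fail is if two distinct copies $v^i$ and $v^j$ of the same vertex $v$ both lay in $I_k$, assigning $v$ two colors; but this is exactly what the clonal edges forbid, since $i\neq j$ would give $(v^i,v^j)\in E_{\text{\textdblhyphen}}$, contradicting the independence of $I_k$. Equivalently, this is the content of Lemma \ref{lem:injective_projection}, whose injectivity of $p_G$ on $V(I_k)$ guarantees each $v\in p_G(I_k)$ receives a unique color.

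Then I would check that $c$ is a \emph{proper} coloring of $H$. Taking two adjacent vertices $u,v$ of $H$, they are adjacent in $G$ by construction of the induced subgraph, so $(u,v)\in E$. Since the copies $u^{c(u)}$ and $v^{c(v)}$ both belong to the independent set $I_k$, no conflict edge can join them, i.e. $(u^{c(u)},v^{c(v)})\notin E^{\text{\textdblhyphen}}$. By the definition of $E^{\text{\textdblhyphen}}$ this forces $c(u)\neq c(v)$, which is exactly the coloring condition.

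The one point to state carefully—more a caveat than an obstacle—is the restriction of the domain. Because a maximum clique $\mathcal{C}_k$ need not attain size $\lvert V\rvert$, the projection $p_G$ need not be surjective, so $c$ colors only the induced subgraph $H$ on $p_G(\mathcal{C}_k)$ rather than the whole of $G$. This is the natural weakening of the argument in the theorem, where the size-$\lvert V\rvert$ hypothesis upgraded the injectivity of Lemma \ref{lem:injective_projection} to a bijection and thereby produced a coloring of all of $G$; here injectivity alone suffices for the stated conclusion.
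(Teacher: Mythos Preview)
Your proposal is correct and follows precisely the route the paper intends: the corollary is stated as an immediate consequence of the proof of Theorem~\ref{theo:kcoloring_kclique}, and you have extracted exactly the relevant portion of the $(\Leftarrow)$ argument, using the clonal edges (equivalently Lemma~\ref{lem:injective_projection}) to show $c$ is well-defined and the conflict edges to show it is proper, while correctly noting that the size-$\lvert V\rvert$ hypothesis is what is dropped.
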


Let again $\mathcal{C}_k$ be a clique in $\bar{G_k}$.
If $\lvert\mathcal{C}_k\rvert=\lvert V\rvert$ then $\mathcal{C}_k$ provides a coloring of $G$. Otherwise, if $\lvert\mathcal{C}_k\rvert<\lvert V\rvert$, at least one additional color is required to color the vertices not in $\mathcal{C}_k$. If the vertices with no color can be colored with one additional color, a $k+1$-coloring has been found. Otherwise $k$ is increased to solve another iteration of the $k$-coloring problem.

\subsection{Heuristic algorithms} \label{sec:heur}
This section describes three classical heuristic graph coloring methods.

\emph{Degree of saturation (Dsatur) \cite{Brelaz}.} This algorithm generates a vertex ordering and then colors the graph vertices one after another, adding a previously unused color when needed. Its goal is to prioritize the vertices that are seen to be the most constrained. These are dealt with first, allowing the less constrained vertices to be colored later. It starts by coloring the vertex with the maximal degree. The algorithm then determines which of the remaining uncolored vertices has the highest \emph{degree of saturation}, i.e., the highest number of colors in its neighborhood; and colors this vertex next. In cases of ties, the algorithm chooses the vertex with the largest degree in the subgraph induced by the uncolored vertices. 

\emph{Recursive Largest First (RLF) \cite{Leighton}.} This algorithm assigns colors to vertices in a graph by building each color class -- a set of vertices which can be colored with a single color -- one at a time. It iteratively selects a vertex for coloring, which will, in some sense, leave the resulting uncolored vertices colorable in as few colors as possible. More precisely, it starts by assigning a color to the vertex of maximal degree. Subsequent vertices assigned to the same color are chosen as those that (a) are not currently adjacent to any colored vertex and (b) have a maximal number of neighbors that are adjacent to colored vertices. Ties can be broken by selecting the vertex with the minimum degree in the subgraph of uncolored vertices. Vertices are selected in this way until it is no longer possible to add further vertices. Then the entire process is repeated recursively on the subgraph of uncolored vertices using the next available color.

\emph{Smallest Last with Interchange (SLI)} is a sequential coloring method introduced in \cite{Matula}, based on the idea that vertices with fewer neighbors should be colored later. The algorithm creates an ordering ${v_1,\ldots,v_n}$ of the vertices in which $v_n$ has the smallest degree, and for each $i = 1, \dots, n-1$, $v_i$ has the smallest degree in the subgraph induced by $\{v_{i+1},\ldots,v_n\}$. When coloring a vertex, SLI may need to introduce a new color. However, it tries to avoid this by performing an \emph{interchange}. Specifically, if $v_m$ needs to be colored and colors $1$ through $k$ have already been used, we examine subgraphs $G_{ij}$ formed by the vertices colored with color $i$ or $j$. If we can find colors $i$ and $j$ such that no connected component of $G_{ij}$ has two differently colored vertices, both adjacent to $v_m$, we can perform an interchange on each connected component of $G_{ij}$ which contains an $i$-colored vertex adjacent to $v_m$. This allows $v_m$ to be assigned color $i$ without introducing a new color. If no such choice exists, color $k+1$ must be used for $v_m$.

\subsection{Exact algorithm}\label{sec:exact}
This section gives a brief description of the exact algorithm used for the numerical experiments in Section~\ref{sec:results}. 

One of the best methods known for determining lower bounds on the vertex coloring number of a graph is the linear-programming column-generation
technique, where variables correspond to independent sets~\cite{Mehrotra}. 

For all maximal independent sets $S\in\mathcal{S}$, the optimal chromatic number $\chi(G)$ is given by the following integer-programming problem
\begin{eqnarray}
\chi(G) =  &\min  \quad 
\sum_{S\in\mathcal{S}} x_S & \label{obj}\\
&\sum_{S\in\mathcal{S}:v\in S} x_S\geq 1 &\forall v\in V \label{ineq:covering}\\
&x_S\in \{0,1\} & \forall S\in \mathcal{S} \label{varInt}
\end{eqnarray}

The objective function~\eqref{obj} corresponds to the number of colors used in the solution. The covering constraint~\eqref{ineq:covering} ensures that each vertex of the graph belongs to at least one independent set. This formulation is not compact as the number of constraints is in $O(n)$ and the number of variables in $O(2^n)$.

When handling such an exponential formulation in the number of variables, a branch-and-price method is required. It amounts to a branch-and-bound method in which at each node of the search tree, columns may be added to the linear programming relaxation.  This approach was proposed in its main components for  formulation~\eqref{obj}--\eqref{varInt} in~\cite{Mehrotra}. In~\cite{Held} a complementary work dealt with numerical difficulties
in the context of column generation. The authors devised a complete and efficient branch-and-price algorithm using their
own branch-and-bound based algorithm for solving the pricing subproblem \footnote{Code available at github.com/heldstephan/exactcolors.}.    

\begin{algorithm}
\caption{\texttt{GBSC}}
\begin{algorithmic}[1]
\Require Graph $G=(V,E)$
\Ensure Color assignment \texttt{color} for all $v \in V$
\State Initialize uncolored graph $H \gets G$
\State Initialize $\texttt{color}(v) \gets |V|$ for all $v \in V$
\While{there is $(u,v)$ in $E$ with $\texttt{color}(u) = \texttt{color}(v)$}
    \State $C_{\text{max}} \gets \texttt{FindCliques}(H)$ (See Alg. 2)
    \State $C \gets \texttt{BestClique}(H,C_{\text{max}})$ (See Alg. 3)
    \ForAll{vertices $v^i$ in clique $C$}
        \State $\texttt{color}(v) \gets i$
        \State $H\gets H\setminus\{v\}$
    \EndFor
\EndWhile
\State \Return $\texttt{color}$
\end{algorithmic}
\end{algorithm}

\subsection{GBS coloring algorithm}
In this section, we give a detailed explanation of our coloring method, which is divided into three algorithms. Algorithm 1 introduces the coloring procedure, Algorithms 2 and 3 are accessory and are presented separately for better readability.
The latter algorithms make also use of subroutines \texttt{sample}, \texttt{shrink}, and \texttt{search} provided by the Strawberryfields library, a short description of which is given in the Appendix. An illustration of our method is shown in Figure \ref{fig:flowchart}.

\emph{Algorithm 1}. To compute a valid coloring for a graph $G=(V,E)$, we introduce algorithm \texttt{GBSC} (Gaussian Boson Sampling Coloring). It starts by initializing the \emph{residual uncolored graph} $H$ (the subgraph of $G$ consisting of uncolored vertices) as $G$ (line~1). Each vertex $v \in V$ is initially assigned the placeholder color $|V|$ (line~2), ensuring that all vertices start in the same color class. The algorithm iteratively resolves conflicts, while there exists an edge in $E$ whose ends are assigned to the same color (line~3), it proceeds as follows: 

- Call Algorithm 2 (\texttt{FindCliques}) on the current residual graph $H$ (line~4) to generate a list of candidate cliques using Gaussian Boson Sampling (see Section \ref{GBS_clique} for details).  

- Call Algorithm 3 (\texttt{BestClique}) from the generated list to select the best clique  according to a hierarchical set of criteria (line~5). 

Once the best clique $C$ is selected, for each vertex $v^i$ in $C$ (line~6), 
color $i$ is assigned to $v$ (line~7) and 
vertex $v$ removed from $H$ (line~8), thus reducing the residual graph size. The \emph{while} loop continues until there are no conflicting edges, at which point all vertices are properly colored. The algorithm then returns the final color assignment (line~11). 

\begin{algorithm}
\caption{\texttt{FindCliques}}
\begin{algorithmic}[1]
\Require Graph $H$
\State $h \gets$ Hoffman bound of $H$
\State $k \gets \lceil h \rceil$
\State $H_k\gets$ augmented graph of $H$ 
\State $\bar{H_k} \gets$ complement graph of $H_k$
\State $S_{\text{list}} \gets$ \Call{\texttt{sample}}{$\bar{H_k}, |H|$} 
\State $C_{\text{start}} \gets$ [\Call{\texttt{shrink}}{$S$} : $S \in S_{\text{list}}$] 
\State $C{_\text{max}} \gets$ [\Call{\texttt{search}}{$C,\bar{H_k}$} : $C \in C_{\text{start}}$] 
\State \Return $C_{\text{max}}$
\end{algorithmic}
\end{algorithm}

\emph{Algorithm 2}. \texttt{FindCliques} is introduced to find a list of candidate cliques in the complement of an augmented graph of an input graph $H$. It starts by calculating the Hoffman lower bound $h$ for the chromatic number of $H$ \cite{hoffman} (line~1), and constructs the augmented graph $H_k$, where $k = \lceil h \rceil$ (lines~2, 3). It then proceeds to calculate the complement graph $\bar{H_k}$ (line~4). As described in Section \ref{sec:reduction_maxclique}, the $k$-coloring problem can be reduced to finding a maximum independent set in $H_k$, which is equivalent to identifying a maximum clique in $\bar{H_k}$. In line~5 we use Gaussian Boson Sampling (subroutine \texttt{sample}) to obtain dense subgraphs in $\bar{H_k}$ with mean photon number (vertex count) equal to $|H|$. In line~6 we call subroutine \texttt{shrink} to remove vertices with low degree until a set of cliques is obtained. These cliques are the input of subroutine \texttt{search} which applies the grow and swap heuristic, thus obtaining a set of potential max cliques (line~7). Each clique in this group corresponds to a set of vertices in $G$ with associated colors.

\emph{Algorithm 3}. \texttt{BestClique} selects the most promising clique from a list of cliques inside an augmented graph, based on several hierarchical criteria. It takes as input a graph $H$ together with a list of cliques in the complement of its augmented $k$-graph. For each clique $C$ in this list (line~1) it removes the corresponding vertices from the graph $H$ to form a residual graph $H'$ containing the uncolored vertices (line~2). To select the best clique, we apply the following hierarchical criteria: 
\begin{enumerate}
    \item Maximum size $|C|$ (line~3).
    \item Minimum number of colors used in $C$ (line~4).
    \item Minimum coloring of $H'$ (line~5)\footnote{Coloring obtained using Dsatur.}.
    \item Minimum density $H'$ (line~6).
\end{enumerate}
The first criterion (line~3) aims to maximize the number of vertices that can be colored with $k$ colors. Criterion 2 (line~4) aims to select, among cliques of the same size, those that use fewer colors. In case ties still remain, we resort to the last two criteria, they ensure that the remaining uncolored vertices can be colored with as few colors as possible (line~5) and that the density of the residual graph $H'$ has the lowest density (line~6). The algorithm returns the best clique under this hierarchical criteria (line~9). 

\begin{algorithm}
\caption{\texttt{BestClique}}
\begin{algorithmic}[1]
\Require Graph $H$, cliques $C_{\text{max}}$ in $\bar{H_k}$
\ForAll{$C$ in $C_{max}$}
    \State $H' \gets H \setminus p_H(C)$ (see Definition \ref{def:projection})
    \State $k_1 \gets |C|$ 
    \State $k_2 \gets $ colors used in $C$
    \State $k_3 \gets$ additional colors estimated by \Call{\texttt{dsatur}}{$H'$} 
    \State $k_4 \gets$ density of $H'$ 
    \State $\texttt{key}(C) \gets (k_1, -k_2, -k_3, -k_4)$
\EndFor
\State \Return lexicographic $\underset{C\in C_{\text{max}}}{\texttt{argmax}}$ $\texttt{key}(C)$
\end{algorithmic}
\end{algorithm}

Our method differs from the one presented in Section \ref{sec:reduction_maxclique} in that we do not increment the value of $k$ at each iteration. Instead, we focus on finding a coloring for the uncolored vertices at each step. 

\begin{figure*}
\center
\includegraphics[width=\textwidth]{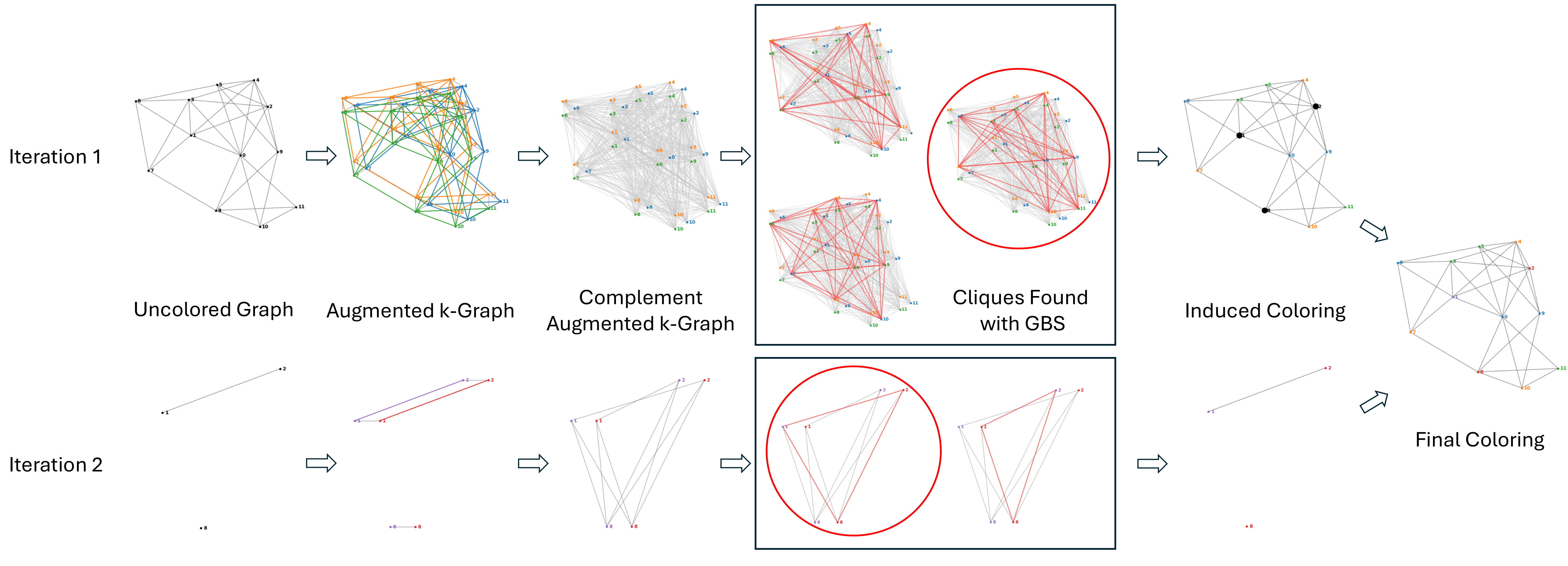}
\caption{\small \textbf{Iterative coloring with \texttt{GBSC}}.
The first iteration starts from the input graph. This graph is transformed into its $k$-augmented version (here $k=3$ is the ceiling of the Hoffman bound), and then into the complemented augmented graph. The central step, shown in the square, is the identification of cliques via \texttt{FindCliques}, with the most significant one (\texttt{BestClique}) highlighted in red. The resulting $k$-coloring colors all vertices except $\{1,2,8\}$. Since uncolored vertices remain, a second iteration is performed on their induced subgraph, following the same steps (here with $k=2$). The final 5-coloring of the original graph, obtained by combining both iterations, is shown on the right.}

\label{fig:flowchart}
\end{figure*}

\section{Computational Results}\label{sec:results}
In this section, we evaluate the performance of our proposed graph coloring algorithm, referred to as GBSC, through a series of computational experiments. Our approach leverages the Strawberryfields library by Xanadu \cite{Bromley} for the computation of dense subgraphs and subsequent clique detection. We first assess the ability of GBSC to find near-optimal colorings for random graphs, benchmarking its performance against three established and highly effective coloring heuristics described in Section{sec:heur}: Smallest Last with Interchange (SLI) \cite{SLI}, Recursive Largest First (RLF) \cite{RLF}, and Degree of Saturation (Dsatur) \cite{Brelaz}. Subsequently, we investigate the practical applicability of GBSC in a smart-charging use case.

To ensure a rigorous comparison, we use a branch-and-price algorithm (detailed in Section \ref{sec:exact}) to determine the exact chromatic number for each graph instance. We then measure the average number of excess colors used by each method, defined as the difference between the number of colors used by the algorithm and the chromatic number, normalized by the number of graph instances.

All experiments were conducted on a high-performance computing cluster, equipped with an Intel Xeon Platinum 8558P processor and 1031 GB of RAM. The operating system was Rocky Linux 8.8 (Green Obsidian). For each experiment, we used 1 CPU core per task and a maximum runtime of 30 hours. The workload was parallelized using SLURM array jobs, with a total of parallel tasks equaling the number of instances (graphs) being treated.

\subsection{Random graph coloring}\label{sec:random_coloring}
The Erd\"{o}s-R\'enyi model is either of two closely related models for random graph generation. In one of these \cite{Erdos}, graphs with $n$ vertices are constructed by connecting vertices randomly. Each edge is included in the graph with probability $p\in [0,1]$ independently from every other edge. 

\begin{table}
\centering
\scriptsize
\begin{tabular}{c c c c c}
 & Group 1 & Group 2 & Group 3 & Group 4\\ 
 \hline
 Size &  91 & 91 & 91 & 91\Tstrut\\
 $p$ & 0.72-0.87 & 0.50-0.71 & 0.37-0.49 & 0.22-0.36\Tstrut\\
\end{tabular}
\caption{Groups of randomly generated graphs.}
\label{tab:random_graphs}
\end{table}

Following this procedure, we generated four distinct groups of random graphs. For each group, edge probabilities were sampled uniformly from a specified interval. All groups include 13 graphs for each of the following vertex counts : 10, 15, 20, 25, 30, 35, and 40. Table \ref{tab:random_graphs} summarizes the total number of instances per group (first row), along with the corresponding edge probability ranges (second row).

\begin{table*}
\begin{subtable}[b]{0.48\linewidth}
\resizebox{\linewidth}{!}{
\begin{tabular}[b]{cccc|llll}
\toprule
& $p$ & $\chi$ & nsamples & SLI & RLF & Dsatur & GBSC \\
\midrule
Group 1 & 0.72-0.87 & 7 & 10 & 0.1 & 0.1 & 0.2 & \textbf{0}\\
&           & 9 & 10 & 0.3 & 0.4 & 0.9 & \textbf{0.1} \\
&           & 10 & 10 & 0.3 & 0.5 & 0.9 & \textbf{0.1} \\
&           & 12 & 13 & \textbf{0.38} & 0.46 & 1.0 & \textbf{0.38} \\
&           & 13 & 10 & 0.6 & 0.8 & 1.9 & \textbf{0.4} \\
\cmidrule(lr){5-8}
\multicolumn{4}{c}{Average} & 0.29 & 0.42 & 0.91 & \textbf{0.19}\\
\midrule
Group 2 & 0.50-0.71 & 5 & 10 & \textbf{0} & 0.1 & 0.2 & \textbf{0}\\
&           & 6 & 10 & 0.1 & 0.1 & 0.1 & \textbf{0}\\
&           & 7 & 15 & \textbf{0.2} & 0.4 & 0.67 & \textbf{0.2}\\
&           & 8 & 10 & 0.4 & 0.5 & 1.2 & \textbf{0.1}\\
&           & 9 & 16 & 0.63 & 0.63 & 1.5 & \textbf{0.31}\\
&           & 10 & 14 & 0.86 & 0.86 & 1.64 & \textbf{0.36}\\
\cmidrule(lr){5-8}
\multicolumn{4}{c}{Average} & 0.41 & 0.51 & 1.01 & \textbf{0.18} \\
\bottomrule
\end{tabular}
}
\end{subtable}
\hfill
\begin{subtable}[b]{0.48\linewidth}
\resizebox{\linewidth}{!}{
\begin{tabular}[b]{cccc|llll}
\toprule
& $p$ & $\chi$ & nsamples & SLI & RLF & Dsatur & GBSC \\
\midrule
Group 3 & 0.37-0.49 & 4 & 13 & \textbf{0.15} & 0.38 & 0.69 & \textbf{0.15}\\
&           & 5 & 16 & \textbf{0.19} & 0.31 & 0.69 & \textbf{0.19}\\
&           & 6 & 26 & 0.35 & 0.53 & 0.96 & \textbf{0.31}\\
&           & 7 & 18 & 0.61 & 0.83 & 1.22 & \textbf{0.44}\\
\cmidrule(lr){5-8}
\multicolumn{4}{c}{Average} & 0.33 & 0.52 & 0.86 & \textbf{0.26} \\
\midrule
Group 4 & 0.22-0.36 & 3 & 18 & \textbf{0} & 0.17 & 0.39 & \textbf{0}\\
&           & 4 & 27 & \textbf{0} & 0.19 & 0.44 & 0.07\\
&           & 5 & 26 & \textbf{0.15} & 0.73 & 1.0 & 0.23\\
&           & 6 & 16 & 0.25 & 0.5 & 0.94 & \textbf{0.19}\\
\cmidrule(lr){5-8}
\multicolumn{4}{c}{Average} & \textbf{0.1} & 0.4 & 0.7 & 0.13 \\
\bottomrule
\end{tabular}
}
\end{subtable}
\caption{Average number of excess colors for random graph coloring.}
\label{tab:random_comparison}
\end{table*}

Table \ref{tab:random_comparison} reports a comparative analysis of the proposed GBSC algorithm and the three benchmark heuristics—SLI, RLF, and Dsatur—based on the average excess colors relative to the chromatic number. Each of the four groups corresponds to a distinct range of edge probabilities ($p$). Within each group, results are further subdivided by chromatic number ($\chi$), and the reported averages are computed as follows: for each chromatic number, the excess colors for a given heuristic are calculated as the difference between the number of colors used by the heuristic and the exact chromatic number (determined using the branch-and-price algorithm described in Section \ref{sec:exact}). This difference is then averaged over all graph instances within the subdivision, as specified in the \emph{nsamples} column. The best results are highlighted in bold.

In Group 1, GBSC achieves the lowest excess in all chromatic number cases except for $\chi = 12$, where SLI shows comparable performance. For Group 2, GBSC consistently obtains the best results, matching SLI in two instances and outperforming all others elsewhere. In Group 3, GBSC outperforms SLI for chromatic numbers 6 and 7, while achieving comparable performance for $\chi = 4,5$. Group 4, the most challenging set, shows more mixed results. GBSC achieves the best coloring in one instance, and ties with SLI for $\chi = 3$. SLI outperforms all other methods in the remaining cases. Nevertheless, the difference in average excess colors between GBSC and SLI across this group is 0.03, which translates to a total difference of only 3 colors, a surprisingly narrow margin considering the total of 416 colors used in the optimal colorings for this group. Overall, GBSC delivers the lowest average excess across all groups, with particularly strong performance in denser graphs (Groups 1 and 2). In sparser scenarios (Groups 3 and 4), GBSC remains competitive, frequently matching or exceeding the performance of classical heuristics.

These findings are further supported by the Win-Draw-Loss (W-D-L) statistics presented in Table \ref{tab:random_W-D-L}. The comparison is conducted at the level of individual chromatic number instances (per-row comparisons within each group), excluding aggregated group averages to prevent redundancy. Overall, compared to Dsatur of RLF, GBSC records the highest number of wins (19), with no loss or draw. When compared to SLI, GBSC secures 11 wins, suffers 2 losses, and draws 6 times. Focusing on the denser graph instances, GBSC shows a strong performance compared to the other methods. On Group 1, GBSC demonstrates a clear advantage over SLI, with four wins, one draw, and no losses. For Group 2, GBSC outperformes SLI having four wins, two draws and no losses. For the sparser graphs, SLI slightly improves its performance compared to GBSC. For Group 3, SLI secures two draws with GBSC, but is outperformed two times. For Group 4 SLI outperforms GBSC two times, with one draw and one loss. It is particularly noteworthy that GBSC excels in denser graphs, which are typically more difficult to color due to their increased edge density. These results reinforce the robustness and effectiveness of GBSC across varying graph structures and densities, and underscore its superiority over classical heuristics in reducing coloring overhead.

\begin{table}
\centering
\begin{tabular}{|c|c c c|} 
 \hline
  & SLI & RLF & Dsatur \\ 
 \hline
 Group 1 & 4-1-0 & 5-0-0 & 5-0-0 \\
 Group 2 & 4-2-0 & 6-0-0 & 6-0-0\\ 
 Group 3 & 2-2-0 & 4-0-0 & 4-0-0\\ 
 Group 4 & 1-1-2 & 4-0-0 & 4-0-0\\
 \hline
 Overall & 11-6-2 & 19-0-0 & 19-0-0\Tstrut\\
 \hline
\end{tabular}
\caption{\centering{Win-Draw-Loss (W-D-L) for random graph coloring.}}
\label{tab:random_W-D-L}
\end{table}

\subsection{Smart-charging use case}\label{subsec:ev_charging}
Charging an electric vehicle (EV) takes a finite duration with a starting time typically pre-scheduled by the user. Each EV must be charged within a specific time interval. With the advent of pre-booking systems in smart-charging, a user can select a subset of possible charging intervals, thus offering terminal operators some flexibility in assigning EVs to charging terminals. Such subsets correspond to groups of alternatives.

In this article, the smart-charging problem is, given a set of time intervals partitioned into groups, to assign all EVs to a minimum number of terminals. 
Note that this problem is a generalization of the problem considered in \cite{aquare24}, where the problem was restricted to a single terminal, then the objective was to serve the maximum number of EVs from distinct groups. 

This problem reduces to an Interval Scheduling Problem (ISP), where tasks correspond to intervals with groups containing at most $K$ tasks. $K$ represents the maximum number of tasks allowed per group across all groups. A subset of tasks is compatible if no intervals overlap and no two intervals belong to the same group. This problem is formally defined as the Group Interval Scheduling Problem (GISP), which seeks the largest compatible subset of tasks. The GISP is NP-hard for $K\geq3$ and lacks a Polynomial-Time Approximation Scheme \cite{Nakajima82}, but is polynomially solvable for $K\leq2$ \cite{Kiel92}. The GISP also reduces to finding the MIS on a Group Interval graph. In the GISP, each task corresponds to a vertex, and any two incompatible tasks correspond to an edge. In this article, the generalized problem translates into a graph coloring problem, where each terminal corresponds to a color, and the objective is to find an assignment of all vertices with a minimum number of different colors, i.e., matching the chromatic number of the graph. 

To evaluate the scalability of our coloring method, we considered graphs with 12, 20, 24, 30, 36, and 40 vertices, using 201 instances for each size. The data used to define the graph relative to each instance has been obtained from the parisian public network of charging stations for EV~\footnote{https://belib.paris/en/home} in april-may 2007, while the groups are randomly generated with $K=4$. These instances were partitioned into four quartiles based on graph density ($\rho$). Within each quartile, results were further subdivided by chromatic number ($\chi$), excess colors were computed for all graph instances in each subdivision, and then normalized by the number of samples (\emph{nsamples}) within. Performance was benchmarked against SLI, RLF, and Dsatur in Table \ref{tab:ev_comparison}, with best results highlighted in bold. 

\begin{table*}
\begin{subtable}[b]{0.48\linewidth}
\resizebox{\linewidth}{!}{
\begin{tabular}[b]{cccc|llll}
\toprule
& $\rho$ & $\chi$ & nsamples & SLI & RLF & Dsatur & GBSC \\
\midrule
Quartile 1 & 0.11-0.23 & 4 & 32 & 0.03 & 0.41 & 0.63 & \textbf{0}\\
&           & 5 & 106 & 0.03 & 0.29 & 0.38 & \textbf{0.01} \\
&           & 6 & 98 & \textbf{0} & 0.17 & 0.19 & 0.01 \\
&           & 7 & 39 & \textbf{0} & 0.03 & 0.15 & \textbf{0} \\
&           & 8 & 23 & \textbf{0} & \textbf{0} & 0.04 & \textbf{0} \\
\cmidrule(lr){5-8}
\multicolumn{4}{c}{Average} & 0.013 & 0.21 & 0.29 & \textbf{0.006}\\
\midrule
Quartile 2 & 0.23-0.3 & 4 & 16 & 0.13 & 0.44 & 0.75 & \textbf{0}\\
&           & 5 & 50 & \textbf{0.02} & 0.18 & 0.34 & 0.1\\
&           & 6 & 72 & \textbf{0} & 0.18 & 0.36 & 0.04\\
&           & 7 & 57 & 0.02 & 0.18 & 0.35 & \textbf{0}\\
&           & 8 & 71 & \textbf{0} & 0.01 & 0.18 & 0.03\\
&           & 9 & 14 & \textbf{0} & \textbf{0} & 0.07 & \textbf{0}\\
\cmidrule(lr){5-8}
\multicolumn{4}{c}{Average} & \textbf{0.01} & 0 .13 & 0.32 & 0.03 \\
\bottomrule
\end{tabular}
}
\end{subtable}
\hfill
\begin{subtable}[b]{0.48\linewidth}
\resizebox{\linewidth}{!}{
\begin{tabular}[b]{cccc|llll}
\toprule
& $\rho$ & $\chi$ & nsamples & SLI & RLF & Dsatur & GBSC \\
\midrule
Quartile 3 & 0.3-0.38 & 4 & 33 & \textbf{0} & \textbf{0} & 0.06 & \textbf{0}\\
&           & 5 & 57 & 0.04 & 0.19 & 0.44 & \textbf{0}\\
&           & 6 & 65 & \textbf{0} & 0.14 & 0.29 & 0.02\\
&           & 7 & 57 & 0.02 & 0.26 & 0.49 & \textbf{0}\\
&           & 8 & 44 & \textbf{0.02} & 0.11 & 0.32 & \textbf{0.02}\\
&           & 9 & 17 & \textbf{0} & 0.06 & 0.29 & \textbf{0}\\
&           & 10 & 14 & \textbf{0} & \textbf{0} & 0.36 & \textbf{0}\\
\cmidrule(lr){5-8}
\multicolumn{4}{c}{Average} & 0.013 & 0.14 & 0.33 & \textbf{0.006} \\
\midrule
Quartile 4 & 0.38-0.83 & 4 & 30 & 0.03 & 0.03 & 0.23 & \textbf{0}\\
&           & 5 & 42 & \textbf{0} & 0.05 & 0.19 & \textbf{0}\\
&           & 6 & 52 & 0.06 & 0.13 & 0.21 & \textbf{0.02}\\
&           & 7 & 77 & 0.03 & 0.05 & 0.22 & \textbf{0}\\
&           & 8 & 60 & \textbf{0} & 0.03 & 0.05 & 0.02\\
&           & 9 & 32 & \textbf{0} & \textbf{0} & 0.22 & \textbf{0}\\
\cmidrule(lr){5-8}
\multicolumn{4}{c}{Average} & 0.02 & 0.05 & 0.18 & \textbf{0.01} \\
\bottomrule
\end{tabular}
}
\end{subtable}
\caption{Average number of excess colors for the smart-charging use case.}
\label{tab:ev_comparison}
\end{table*}

For the first quartile, GBSC consistently achieves the best performance across all chromatic classes except for $\chi=6$, where SLI slightly outperforms it. Notably, this discrepancy corresponds to a single instance where GBSC required one additional color, which is negligible relative to the 588 colors used across this class. In the second quartile, GBSC reaches optimal colorings in half of the chromatic classes, while SLI performs better in the remaining cases. RLF surpasses GBSC in only one class, also by a single color. Results for the third quartile strongly favor GBSC: it outperforms SLI in six of the seven chromatic classes, losing only in the $\chi=6$ class for a single instance and by one color. In two classes of this quartile, RLF matches GBSC’s performance. Finally, in the fourth quartile, GBSC is again the dominant method, outperforming SLI in half of the classes and matching it in two. Only in the class with $\chi=8$ does SLI achieve a marginally better results, outperforming GBSC in just one instance. It is also worth mentioning that out of the 24 chromatic classes displayed in Table \ref{tab:ev_comparison}, our method finds the optimal coloring 15 times, as compared to 13 times for SLI, and 5 times for RLF.

\begin{table}
\centering
\begin{tabular}{|c|c c c|} 
 \hline
  & SLI & RLF & Dsatur \\ 
 \hline
Quartile 1 & 2-2-1 & 4-1-0 & 5-0-0 \\
Quartile 2 & 2-1-3 & 4-1-1 & 6-0-0\\ 
Quartile 3 & 2-4-1 & 5-2-0 & 7-0-0\\ 
Quartile 4 & 3-2-1 & 5-1-0 & 6-0-0\\
 \hline
 Overall & 9-9-6 & 18-5-1 & 24-0-0\Tstrut\\
 \hline
\end{tabular}
\caption{\centering{Win-Draw-Loss (W-D-L) for smart-charging use case.}}
\label{tab:ev_W-D-L}
\end{table}

These findings are summarized in Table \ref{tab:ev_W-D-L}, which reports the Win–Draw–Loss (W–D–L) statistics based on the comparisons in Table \ref{tab:ev_comparison}. The comparison is performed at the level of individual chromatic number instances (per-row comparisons), excluding group averages to avoid redundancy. In general, GBSC exhibits flawless performance against Dsatur. Compared to RLF, it secures 18 wins, 5 draws, and only 1 loss. Against SLI, GBSC achieves 9 wins, 9 draws, and 6 losses. At the quartile level, GBSC surpasses all benchmarks except in the second quartile, where SLI obtains 3 wins compared to 2 for GBSC and 1 draw. Overall, these results highlight the robustness of GBSC across diverse graph densities and chromatic complexities, demonstrating its superiority over classical heuristics in finding minimal colorings.

\section{Conclusion}\label{sec:conclusion}
We have introduced GBSC, a novel and efficient quantum-inspired approach for solving the graph coloring problem. Our method iteratively employs GBS to identify large cliques in suitable graphs derived from residual uncolored vertices and targets promising vertex sets for coloring. This strategy amounts to find a coloring incrementally, thus allowing us to save computing resources.

Extensive experiments on both random graphs and graphs arising from a smart-charging use case demonstrate that GBSC consistently achieves competitive or superior results compared to state-of-the-art heuristic methods, particularly in dense graph regimes where coloring is the most challenging. In the random graph experiments, GBSC achieved the lowest overall average excess colors, outperforming SLI, RLF, and Dsatur in most scenarios. In the EV scheduling application, GBSC often finds the optimal coloring and exhibits stable performance across a wide range of densities and chromatic numbers.

While these results highlight the potential of quantum-inspired sampling techniques for tackling combinatorial optimization problems, it is important to note that our experiments were conducted via classical simulation, which inherently limits the size of the graphs that can be studied. As photonic quantum computing hardware continues to evolve—addressing challenges such as photon loss, or single-photon detection efficiency—we anticipate that our approach could be further scaled up, unlocking new possibilities for solving larger and more complex instances.




\newpage
\addcontentsline{toc}{section}{References}
\bibliographystyle{ieeetr}
\bibliography{reference.bib}
\nocite{*}

\appendix
\section{Appendix}

This appendix summarizes the Strawberryfields subroutines employed in Algorithm~2. All functions are part of the \texttt{strawberryfields.apps} module and are used with their default parameters unless otherwise specified

\subsection{Sampling}

The function \texttt{sample} provides synthetic GBS samples based on a given symmetric matrix $A$, and a mean number of photons $\mu$. Optionally, a detection scheme (threshold or photon-number-resolving), the fraction of photons loss, and the number of resulting samples can be specified.

In our experiments, matrix $A$ is the adjacency matrix of $\bar{H_k}$, and $\mu$ is equal to $|H|$. We generate $6\cdot|H|$ samples using threshold detection, and assume no photon loss. 

\subsection{Cliques}

The \texttt{shrink} and \texttt{search} functions operate on subgraphs derived from GBS samples.
\newline

- \texttt{shrink} iteratively removes vertices from an input subgraph until it forms a clique. At each iteration, the vertex of minimum degree relative to the current subgraph is removed, with ties resolved uniformly at random or according to vertex weights when provided. In our experiments, we use this function to find cliques in $\bar{H_k}$ from dense subgraphs sampled by GBS.
\newline

- \texttt{search} seeks larger cliques by alternating between two phases: a \emph{growth phase}, which greedily adds compatible vertices, and a \emph{plateau search} phase, which swaps vertices to escape local optima. Vertex selection during these phases can be random, degree-based, or weight-based (if specified). The procedure repeats until the specified number of iterations is reached or no further improvements are possible. We use this function to find max cliques in $\bar{H_k}$ with number of iterations equal to $|H|$

\end{document}